\documentclass{article}

\usepackage[utf8]{inputenc}
\usepackage[left=3cm, right=3cm, top=2cm, bottom=2cm]{geometry}

\usepackage[linesnumbered,boxed]{algorithm2e}
\usepackage[noend]{algpseudocode}
\usepackage{mdframed}

\usepackage{amsmath}
\usepackage{amssymb}
\usepackage{amsthm}
\usepackage{authblk}
\usepackage{comment}
\usepackage{xcolor}
\usepackage{tikz}
\usepackage[noadjust]{cite}

\newtheorem{definition}{Definition}
\newtheorem{theorem}{Theorem}

\newtheorem{lemma}[theorem]{Lemma}

\newcommand{\dist}{\text{dist}}

\newcommand{\Oish}{\widetilde{O}}



\title{A Note on Distance-Preserving Graph Sparsification\footnote{A previous version of this paper claimed a proof of a $+4$ all-pairs additive spanner on $O(n^{7/5})$ edges, in addition to the other results presented here.  With apologies, this result has been retracted due to a fatal bug in the argument.  As of this writing, state-of-the-art for the $+4$ spanner is $\Oish(n^{7/5})$ edges, by Chechik \cite{Chechik13soda}.}}
\author{Greg Bodwin\footnote{Research performed while the author was affiliated with Georgia Tech, and supported by NSF awards CCF-1717349, DMS-183932 and CCF-1909756.}\\ \texttt{bodwin@umich.edu}\\ University of Michigan EECS}
\date{}

\begin{document}



\maketitle

\thispagestyle{empty}

\begin{abstract}
We consider problems of the following type: given a graph $G$, how many edges are needed in the worst case for a sparse subgraph $H$ that approximately preserves distances between a given set of node pairs $P$?
Examples include pairwise spanners, distance preservers, reachability preservers, etc.
There has been a trend in the area of simple constructions based on the \emph{hitting set technique}, followed by somewhat more complicated constructions that improve over the bounds obtained from hitting sets by roughly a $\log$ factor.
In this note, we point out that the simpler constructions based on hitting sets don't actually need an extra $\log$ factor in the first place.
This simplifies and unifies a few proofs in the area, and it improves the size of the $+4$ pairwise spanner from $\Oish(np^{2/7})$ [Kavitha Th.\ Comp.\ Sys.\ '17] to $O(np^{2/7})$.
\end{abstract}

\pagebreak

\pagenumbering{arabic} 

\section{Introduction}

In graph algorithms, an effective preprocessing technique is to replace a large input graph with a ``similar'' smaller graph, which can thus be stored or analyzed more efficiently in place of the original.
In this paper, we will specifically study sparsification problems where the goal is to find a sparse subgraph that approximately preserves \emph{some shortest path distances} of the input.
Our focus will be on the following objects:
\begin{definition} [Sparsifier Variants \cite{PU89jacm, LS91, CE06, CGK13, AB18}] \label{def:allspans}
Given a (possibly directed/weighted) graph $G = (V, E)$ and a set of demand pairs $P \subseteq V \times V$, a subgraph $H$ is a \emph{$+k$ pairwise spanner} of $(G, P)$ if we have
$$\dist_H(s, t) \le \dist_G(s, t) + k \qquad \text{for all } (s, t) \in P.$$
We say that a particular demand pair $(s, t)$ is \emph{satisfied} by a subgraph $H$ when the above inequality holds for $s, t$.
When $k=0$, i.e.\ distances between demand pairs are preserved exactly, we say that $H$ is a \emph{distance preserver}.
When $k=\infty$, i.e.\ the only requirement is to preserve reachability between demand pairs, $H$ is called a \emph{reachability preserver}.
\end{definition}

We will use the general term \emph{sparsifier} to be deliberately ambiguous to which of these objects is in play, so that we may speak about all of them at once.
For applications of these various sparsifiers to algorithms, data structures, routing schemes, etc., we refer to the recent survey \cite{ABSHJKS19}.
The most common goal in this area is to bound the extremal tradeoff between the error budget $k$, the number of demand pairs $|P|$, the number of nodes in the input graph $n$, and the number of edges needed in the sparsifier $|E(H)|$ (ideally $|E(H)|$ is as small as possible).

We will discuss a variant on the problem called \emph{sparsifiers with slack}, in which only a constant fraction of the demand pairs need to be satisfied:
\begin{definition} [Sparsifiers with Slack]
Given a graph $G$ and demand pairs $P$, a subgraph $H$ is a \emph{sparsifier with slack} if there is $P' \subseteq P, |P'| = \Omega\left(|P|\right)$, such that $H$ is a sparsifier of $(G, P')$.
\end{definition}

See \cite{CDG06, Dinitz07, KRXY07} for some prior work on various sparsifiers with slack; this is also closely related to the ``for-each'' setting studied for spectral sparsifiers and related objects \cite{ACKQWZ16, CGPSSW18}.
We will say \emph{``complete''} sparsifier when we want to emphasize that we mean a standard sparsifier, with all demand pairs satisfied, rather than a sparsifier with slack.
This paper is driven by the following observation relating the two settings:
\begin{lemma} [Main Lemma] \label{lem:firstmain}
Let $a, b, c>0$ be absolute constants, let $G$ be an $n$-node input graph, and let $p^*$ be a parameter.
Additionally suppose:
\begin{itemize}
\item there is a complete sparsifier on $O(n^a)$ edges for any set of demand pairs of size $|P| \le p^*$, and

\item there is a sparsifier with slack on $O(n^b |P|^c)$ edges for any set of demand pairs of size $|P| \ge p^*$.
\end{itemize}
Then there is a complete sparsifier on $O(n^a + n^b |P|^c)$ edges.
\end{lemma}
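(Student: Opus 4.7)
The plan is to bootstrap the slack sparsifier into a complete sparsifier by iterating it, peeling off a constant fraction of unsatisfied demand pairs each round, and then finishing with the complete sparsifier from the first assumption once the residual set shrinks below the threshold $p^*$.

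More precisely, I would set $P_0 := P$ and define a sequence of subgraphs $H_0, H_1, \ldots$ as follows. As long as $|P_i| \ge p^*$, I invoke the second assumption on $(G, P_i)$ to obtain a subgraph $H_i$ on $O(n^b |P_i|^c)$ edges that satisfies some subset $P_i' \subseteq P_i$ with $|P_i'| \ge \alpha |P_i|$ for an absolute constant $\alpha > 0$; then I set $P_{i+1} := P_i \setminus P_i'$. Once $|P_i| < p^*$, I stop and apply the first assumption to $(G, P_i)$ to obtain a final subgraph $H_{\text{fin}}$ on $O(n^a)$ edges that satisfies all remaining demand pairs. The output is $H := H_{\text{fin}} \cup \bigcup_i H_i$, which is a complete sparsifier of $(G, P)$ by construction since every demand pair is satisfied in some piece and $H$ contains each piece as a subgraph.

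The key bound is on $|E(H)|$. Since $|P_{i+1}| \le (1-\alpha) |P_i|$, the number of demand pairs decays geometrically, and the edge count of each iterate satisfies
\[
|E(H_i)| \;=\; O\!\left( n^b |P_i|^c \right) \;\le\; O\!\left( n^b |P|^c \cdot (1-\alpha)^{ci} \right).
\]
Summing the geometric series in $i$ — which converges because $c > 0$ gives $(1-\alpha)^c < 1$ — yields $\sum_i |E(H_i)| = O(n^b |P|^c)$. Adding the $O(n^a)$ edges of $H_{\text{fin}}$ gives the claimed $O(n^a + n^b |P|^c)$ bound. The edge case $|P| \le p^*$ is immediate from the first assumption alone.

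I don't expect any serious obstacle here: the only real subtlety is noticing that the slack sparsifier can be applied iteratively against a \emph{shrinking} demand set rather than against the original $P$, which is what lets the geometric decay in $|P_i|$ kick in and absorb what would naively look like a $\log |P|$ factor. That observation is essentially the content of the lemma, and the rest is a standard geometric-sum calculation.
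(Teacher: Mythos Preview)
Your proposal is correct and matches the paper's proof essentially line for line: iterate the slack sparsifier, peel off an $\alpha$-fraction of pairs each round so that $|P_i|\le(1-\alpha)^i|P|$, sum the resulting geometric series $\sum_i O(n^b|P|^c(1-\alpha)^{ci})=O(n^b|P|^c)$, and finish with the $O(n^a)$ complete sparsifier once $|P_i|<p^*$. The paper's argument is exactly this, with the same key observation that the geometric decay in $|P_i|$ eliminates the naive $\log$ overhead.
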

\begin{proof}
Let $\alpha$ be an absolute constant such that the sparsifier with slack satisfies at least an $\alpha$ fraction of the given demand pairs.
While $|P| \ge p^*$, compute a sparsifier with slack on $O(n^b |P|^c)$ edges, remove the satisfied demand pairs from $P$, and then repeat on the remaining demand pairs.
Once $|P| \le p^*$, compute one final complete sparsifier on $O(n^a)$ edges, and then union all computed sparsifiers together.

Let $P_i$ denote the demand pairs remaining in the $i^{th}$ round (the initial set of demand pairs is $P_0$).
To bound the total size of the sparsifiers with slack $\{H_i\}$ computed in each round, we have:
\begin{align*}
\left| \bigcup_i E(H_i) \right| \le \sum_i |E(H_i)| &= \sum_i O\left(n^b |P_i|^c \right)\\
&\le \sum_i O\left(n^b \left(|P_0|(1-\alpha)^{i}\right)^c \right) \tag*{$\alpha$ fraction satisfied each round}\\
&= O\left(n^b |P_0|^c\right) \cdot \sum_i (1-\alpha)^{ic}\\
&= O\left( n^b |P_0|^c \right) \tag*{telescoping sum.}
\end{align*}
So we pay $O(n^b |P|^c)$ for the sparsifiers with slack, and $O(n^a)$ for the final complete sparsifier, completing the proof.
\end{proof}

The relevance of this lemma passes through the \emph{hitting set technique}, a common method in the area where one randomly selects nodes of the graph and uses the random choices to inform the construction somehow, arguing that (if the sample is large enough) then one hits every ``important'' part of the graph with high probability.
This generally leads to simple and elegant constructions, at the price of an extra $\log$ factor in the size needed to achieve the high probability guarantee.
Some research effort has been spent discovering somewhat more involved constructions that remove this $\log$ (see below).
The point of Lemma \ref{lem:firstmain} is that, since one only really needs a sparsifier with slack, it actually suffices for the hitting set to hit each important part of the graph with \emph{constant} probability.
This means the $\log$ factor can be removed directly from these hitting set arguments, allowing the simpler constructions to be used.

\begin{theorem} [Informal] \label{thm:sparsifiers}
The following theorems all have simple proofs based on the hitting set technique.
Let $n$ be the number of nodes in the input graph and $p$ the number of demand pairs.
\begin{itemize}
\item Every (possibly directed and weighted) graph has a distance preserver on $O(np^{1/2})$ edges. \cite{CE06}

\item Every (possibly directed) graph has a reachability preserver on $O((np)^{2/3} + n)$ edges. \cite{AB18}

\item Every undirected unweighted graph has a $+2$ pairwise spanner on $O(np^{1/3})$ edges. \cite{KV15, AB16soda, Knudsen14}

\item Every undirected unweighted graph has a $+4$ pairwise spanner on $O(np^{2/7})$ edges. \cite{Kavitha17}

\item Every undirected unweighted graph has a $+6$ pairwise spanner on $O(np^{1/4})$ edges. \cite{Kavitha17, Knudsen14}
\end{itemize}
\end{theorem}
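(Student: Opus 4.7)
The plan is to derive each bullet of Theorem~\ref{thm:sparsifiers} as an application of Lemma~\ref{lem:firstmain}, pairing a trivial complete sparsifier for $|P| = O(1)$ (direct inclusion of the constantly many canonical shortest paths, costing $O(n)$ edges) with a sparsifier-with-slack for larger $|P|$ built via the hitting set technique. The unifying observation is that a slack sparsifier only needs a constant fraction of demand pairs satisfied, so a random vertex sample need only hit each ``important'' subpath with constant probability rather than with high probability; this is exactly what saves the $\log n$ factor present in the standard high-probability arguments, and Lemma~\ref{lem:firstmain} amplifies the slack guarantee back into a complete one without reintroducing the logarithm.

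Concretely for the distance preserver, I would sample $S \subseteq V$ of size $\Theta(p^{1/2})$, include a complete SSSP tree rooted at each $v \in S$ ($O(np^{1/2})$ edges total), and directly include every canonical shortest path of length at most $n/p^{1/2}$ (another $O(np^{1/2})$ edges). A demand pair with a long canonical path is satisfied whenever $S$ hits the path, which occurs with constant probability; pairs with short paths are satisfied outright, so by linearity of expectation a constant fraction of all pairs is satisfied. The reachability bullet and the $+2$ bullet proceed by the same recipe, using the hitting-set-based constructions of \cite{AB18} and of \cite{KV15, AB16soda, Knudsen14} respectively with each high-probability sample replaced by a constant-probability one.

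For the $+4$ and $+6$ bullets the underlying constructions of \cite{Kavitha17, Knudsen14} are more intricate, interleaving multiple clustering steps at different density scales together with a structural argument about sequences of ``missing edges'' along shortest paths between cluster centers. The plan is to walk through each randomized step in these constructions and verify that the required structural invariants still hold within a single round of Lemma~\ref{lem:firstmain} once the sampling is relaxed from high to constant probability. The main obstacle is the $+4$ case: Kavitha's argument uses nested random samples at two densities and a union bound across all cluster centers simultaneously, so the delicate step is rephrasing per-pair satisfaction as an event that fires with constant probability independently of the global structure. Once that local rephrasing is established, the iteration in Lemma~\ref{lem:firstmain} assembles the slack sparsifiers into a complete sparsifier of the claimed size, and the remaining bullets follow as essentially mechanical adaptations.
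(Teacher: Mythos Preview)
Your proposal is correct and follows essentially the same route as the paper: each bullet is obtained by feeding a hitting-set-based sparsifier with slack into Lemma~\ref{lem:firstmain}, with the randomized sample relaxed from high probability to constant probability per demand pair. The paper carries this out explicitly for every bullet (using a $d$-initialization plus the neighborhood-counting Lemma~\ref{lem:initnbhd} for the $+2$, $+4$, $+6$ cases, the black-box $S\times S$ $+2$ spanner of Theorem~\ref{thm:2subspan} for $+6$, and the subset-source bound of Theorem~\ref{thm:reachubs}.2 for reachability), but your outline anticipates exactly these ingredients.
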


Full proofs of these theorems, which are mostly just simplified and unified expositions of the corresponding constructions in prior work, are given in the body of the paper below.
The $+4$ pairwise spanner is actually a slightly improved result here; the previous bound was $\Oish(np^{2/7})$ \cite{Kavitha17}, as the $\log$ factors from the hitting set technique had not been previously shaved.

\section{Pairwise Sparsifiers with Slack}

We now begin proving the pieces of Theorem \ref{thm:sparsifiers}.
We start with the following foundational result in distance preservers:
\begin{theorem} [\cite{CE06}] \label{thm:cedp}
Any $n$-node directed weighted graph $G$ and $|P|=p$ demand pairs have a distance preserver on $O(np^{1/2})$ edges.
\end{theorem}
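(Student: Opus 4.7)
The plan is to deduce Theorem \ref{thm:cedp} from the Main Lemma (Lemma \ref{lem:firstmain}) by exhibiting a distance preserver \emph{with slack} of size $O(np^{1/2})$. I would apply Lemma \ref{lem:firstmain} with parameters $a=1$, $b=1$, $c=1/2$, and threshold $p^* = O(1)$: for $|P| = O(1)$, a complete preserver on $O(n)$ edges is just the union of constantly many shortest paths, so the real content lies in the slack construction.

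For each demand pair $(s,t) \in P$, fix an arbitrary shortest path $\pi(s,t)$ in $G$, and set the threshold $L := n/\sqrt{p}$. Call a pair \emph{short} if $\pi(s,t)$ has at most $L$ edges and \emph{long} otherwise. The slack sparsifier $H$ is built from two pieces. First, add the union of $\pi(s,t)$ over all short pairs; this contributes at most $pL = n\sqrt{p}$ edges and exactly satisfies every short pair. Second, sample $S \subseteq V$ uniformly at random with $|S| = \Theta(\sqrt{p})$, and for each $u \in S$ add one in-shortest-path tree and one out-shortest-path tree rooted at $u$; this contributes $O(|S| \cdot n) = O(n\sqrt{p})$ edges.

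The key observation is that any long pair $(s,t)$ has $\pi(s,t)$ with at least $L = n/\sqrt{p}$ nodes, so a uniform random sample of $\Theta(\sqrt{p})$ nodes intersects $\pi(s,t)$ with constant probability $\alpha > 0$ independent of $n$ and $p$. Whenever such a hit occurs at some $u \in \pi(s,t) \cap S$, the in-tree at $u$ contains $\pi(s,u)$ and the out-tree at $u$ contains $\pi(u,t)$, so their concatenation is an $s$-$t$ walk in $H$ whose length equals $\dist_G(s,u) + \dist_G(u,t) = \dist_G(s,t)$; hence the pair is exactly satisfied. In expectation, at least an $\alpha$ fraction of all long pairs (together with all short pairs) are satisfied, so by averaging there is a realization of $S$ for which $H$ satisfies an $\Omega(1)$ fraction of $P$. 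Feeding this into Lemma \ref{lem:firstmain} then yields a complete preserver on $O(np^{1/2})$ edges.

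The main point to verify is that concatenating paths from the two shortest path trees really yields a distance-tight walk. This is immediate: in- and out-trees store genuine $\dist_G$ distances to and from their root, and $H \subseteq G$ forces $\dist_H \ge \dist_G$ automatically. Unlike the original Coppersmith--Elkin argument, no branching or consistent-tiebreaking machinery is required, and the proof is insensitive to edge weights and orientations.
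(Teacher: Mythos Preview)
Your proposal is correct and follows essentially the same route as the paper: the paper proves the slack version (Theorem~\ref{thm:cedpforeach}) by the identical short/long split with threshold $\ell = n/p^{1/2}$, adding shortest paths for short pairs and in-/out-shortest-path trees from a random sample for long pairs, and then invokes Lemma~\ref{lem:firstmain} to upgrade to a complete preserver. The only cosmetic differences are that the paper samples each node independently with probability $\ell^{-1}$ rather than drawing a fixed-size uniform sample, and it does not spell out the parameters $a,b,c,p^*$ explicitly.
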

Before giving a more involved proof of this theorem, Coppersmith and Elkin point out a simple construction that nearly works, which can easily be converted to the following:
\begin{theorem} [\cite{CE06}] \label{thm:cedpforeach}
Any $n$-node directed weighted graph $G$ and $|P|=p$ demand pairs have a distance preserver with slack on $O(np^{1/2})$ edges.
\end{theorem}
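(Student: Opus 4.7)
The plan is to follow the ``simple construction that nearly works'' alluded to in the theorem statement, which combines an explicit length threshold with a constant-probability hitting-set sampling. First I would fix, for each $(s,t) \in P$, a canonical shortest path $\pi_{s,t}$ in $G$. Setting $L := n/p^{1/2}$, partition the demand pairs into a ``short'' set $A = \{(s,t) \in P : |\pi_{s,t}| \le L\}$ and a ``long'' set $B = P \setminus A$. The short pairs are handled directly: take $H_1 := \bigcup_{(s,t) \in A} \pi_{s,t}$, which contributes at most $|A| \cdot L \le p \cdot n/p^{1/2} = np^{1/2}$ edges and satisfies every pair of $A$ by construction.

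For the long pairs, I would sample $S \subseteq V$ by including each vertex independently with probability $c/L = cp^{1/2}/n$ for a sufficiently large constant $c$, and for each $v \in S$ add to a subgraph $H_2$ the edges of both an in-shortest-path tree and an out-shortest-path tree rooted at $v$ (the directed/weighted setting forces us to take both). In expectation $|S| = O(p^{1/2})$ and $|H_2| = O(np^{1/2})$. For each long pair $(s,t) \in B$, the path $\pi_{s,t}$ contains more than $L$ vertices, so the probability it misses $S$ entirely is at most $(1 - c/L)^L \le e^{-c}$, which is a constant less than $1/2$ for $c$ large. Whenever some $v \in S$ lies on $\pi_{s,t}$, the in/out trees at $v$ preserve $\dist_G(s,v)$ and $\dist_G(v,t)$ exactly, so $\dist_H(s,t) \le \dist_H(s,v) + \dist_H(v,t) = \dist_G(s,t)$. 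Thus a constant fraction of $B$ is preserved in expectation, which together with all of $A$ yields $\Omega(p)$ satisfied pairs overall.

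The one step that needs a little care is passing from expected bounds to a single realization: I want an $S$ for which \emph{both} $|H_2| = O(np^{1/2})$ \emph{and} a constant fraction of $B$ is preserved. Two independent applications of Markov's inequality and a union bound give positive probability of both events simultaneously, so such an $S$ exists. This averaging step is exactly where the philosophy behind Lemma~\ref{lem:firstmain} pays off: because we only need a slack guarantee, a constant-probability hit on each long path is enough, replacing the high-probability union bound (which would cost an extra $\log$ factor) by this routine averaging argument.
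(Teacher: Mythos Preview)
Your proposal is correct and matches the paper's proof essentially step for step: the same threshold $L=n/p^{1/2}$, direct inclusion of short paths, independent sampling at rate $\Theta(1/L)$ with in- and out-shortest-path trees for the long pairs, and the same edge-count balance. The only difference is cosmetic: you spell out the Markov/union-bound step to extract a single good realization, whereas the paper leaves this implicit by stating the edge count in expectation and the per-pair guarantee as ``constant probability or higher.''
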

\begin{proof}
Let $\ell$ be a parameter, and say that a demand pair $(s, t) \in P$ is ``short'' if it has a shortest path on $\le \ell$ edges, or ``long'' otherwise.
\begin{itemize}
\item For any short demand pair $(s, t) \in P$, add all edges of a shortest path $\pi(s, t)$ to the distance preserver (cost $O(p \ell)$).

\item To handle the long demand pairs $(s, t) \in P$, let $R$ be a random sample of nodes obtained by including each node independently with probability $\ell^{-1}$, and add in- and out- shortest path trees rooted at each $r \in R$ (expected cost $O(n^2 / \ell)$).
With constant probability or higher we sample a node $r \in R$ on a shortest path $\pi(s, t)$, and thus there is a shortest $s \leadsto t$ path included between the in- and out- shortest path trees rooted at $r$.
So each long demand pair is satisfied with constant probability or higher.
\end{itemize}
The proof now follows by setting $\ell := n/p^{1/2}$, giving total cost
\begin{align*}
|E(H)| = O\left( \ell p + n^2 / \ell \right) = O\left( n p^{1/2} \right). \tag*{\qedhere}
\end{align*}
\end{proof}

In fact, by Lemma \ref{lem:firstmain}, Theorem \ref{thm:cedpforeach} \emph{implies} Theorem \ref{thm:cedp}, and so this simpler proof suffices for Theorem \ref{thm:cedp}.
For another example along these lines, the following facts are proved in \cite{AB18}:

\begin{theorem} [\cite{AB18}] \label{thm:reachubs} ~
\begin{enumerate}
\item Any $|P|=p$ demand pairs in an $n$-node directed graph $G = (V, E)$ has a reachability preserver on $O((np)^{2/3} + n)$ edges.

\item When $P \subseteq S \times V$ for some subset of $|S|=s$ nodes, there is a distance preserver on $O((nps)^{1/2} + n)$ edges.
\end{enumerate}
\end{theorem}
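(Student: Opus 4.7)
The plan is to invoke the Main Lemma (Lemma \ref{lem:firstmain}) once for each claim. For each I need two ingredients: (i) a complete sparsifier on $O(n)$ edges when $|P|$ is a constant, and (ii) a sparsifier with slack matching the target asymptotic size. The first ingredient is immediate: when $|P|=O(1)$, add one out-reachability tree (Part 1) or out-shortest-path tree (Part 2) per source of a demand pair, contributing $O(n)$ edges. Lemma \ref{lem:firstmain} then assembles these ingredients into a complete sparsifier of the claimed size.

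For Part 1 (reachability preserver with slack), I would follow the hitting-set template of Theorem \ref{thm:cedpforeach}. Fix a canonical $s$-to-$t$ reachability path $\pi(s,t)$ for each demand pair and a length threshold $\ell$. Pairs with $|\pi(s,t)| \le \ell$ are handled directly, contributing $O(p\ell)$ edges. For pairs with longer paths, sample a set $R$ by including each vertex independently with probability $\ell^{-1}$; each long path meets $R$ with constant probability. For each $r \in R$, preserve the $r$-to-$t$ portion by an out-reachability structure at $r$, and for each demand pair, preserve the $s$-to-$r$ portion by including the prefix of $\pi(s,t)$ up to its first hit in $R$. Choosing $\ell = n^{2/3}/p^{1/3}$ makes the short-pair cost $O(p\ell) = O((np)^{2/3})$, which is the claimed bound.

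For Part 2 ($S$-source distance preserver with slack), the scheme is identical except that the $\sigma$-to-$r$ portion for long pairs $(\sigma, t) \in S \times V$ becomes a distance preservation problem on the $|S|\cdot|R| = sn/\ell$ pairs in $S \times R$, to which Theorem \ref{thm:cedp} can be applied recursively at cost $O(n\sqrt{sn/\ell})$. Combining short-pair, out-tree, and recursive-preserver contributions, I would tune $\ell$ to produce the $O(\sqrt{snp})$ bound. The Main Lemma then absorbs the multiplicative $s^{1/2}$ factor (treating the $S$-source setting as a problem class with $|S|=s$ fixed) to yield the complete $O(\sqrt{snp}+n)$ bound.

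The main obstacle in both parts is the out-tree side of the hitting-set argument. Naively adding a full out-tree at every sampled $r \in R$ contributes $\Theta(n|R|)$ edges, which exceeds $(np)^{2/3}$ or $\sqrt{snp}$ in typical parameter regimes (notably $p \ll n^2$). I expect this must be avoided by including only the portions of each out-tree actually traversed by demand pairs meeting $r$, and a careful charging argument bounding the total such cost by the target. The slack setting is essential here: a constant-probability hit suffices, so the sample $R$ can be smaller than the logarithmic factor required for a high-probability hit, which is exactly what eliminates the extra $\log$ factor that a naive complete version of the hitting-set argument would incur.
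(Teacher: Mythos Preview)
Your proposal has a genuine gap, which you yourself identify: the out-tree step costs $\Theta(n|R|) = \Theta(n^2/\ell)$ edges, and with $\ell = n^{2/3}/p^{1/3}$ this is $\Theta(n^{4/3}p^{1/3})$, which exceeds $(np)^{2/3}$ whenever $p < n^2$. Your suggested patch---keep only the traversed portions of each out-tree and bound them by a charging argument---is left as a hope rather than an argument, and this is precisely the nontrivial content that \cite{AB18} has to work for. The same obstacle breaks your plan for Part~2.

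The paper sidesteps this entirely by taking a different route. It does \emph{not} prove Part~2; that is imported as a black box from \cite{AB18}. The paper's contribution is a derivation of Part~1 \emph{from} Part~2 via the Main Lemma. For long pairs in the slack construction, instead of out-trees the paper splits each pair $(s,t)$ whose path is hit at some $r \in R$ into $(s,r)$ and $(r,t)$, and then applies Part~2 twice (once per orientation) with source set $S=R$. This costs $O\big(\sqrt{|R|\,|P_R|\,n} + n\big) = O\big(np^{1/2}/\ell^{1/2} + n\big)$, which balances against the short-pair cost $O(p\ell)$ at $\ell = n^{2/3}/p^{1/3}$ to give $O((np)^{2/3}+n)$. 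The small-$p$ complete preserver needed for Lemma~\ref{lem:firstmain} is also supplied by Part~2 rather than by single out-trees: for $p \le n^{1/2}$ one trivially has $P \subseteq S \times V$ with $s \le p$, so Part~2 already yields $O((np^2)^{1/2}+n)=O(n)$ edges. Thus the out-tree obstacle never arises.
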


The two parts of this theorem are proved separately in \cite{AB18}, each using somewhat involved arguments.
We show that the former actually follows from the latter, thus cutting the work in half.
\begin{proof} [Proof of Theorem \ref{thm:reachubs}.1, given Theorem \ref{thm:reachubs}.2]

When $p = O(n^{1/2})$, we trivially have $P \subseteq S \times V$ for some node subset of size $s \le p$.
Hence, by Theorem \ref{thm:reachubs}.2, there is a complete reachability preserver on $O((np^2)^{1/2} + n) = O(n)$ edges.
When $p = \Omega(n^{1/2})$, we construct a reachability preserver with slack as follows.
Like before, let $\ell$ be a parameter, and say that a demand pair $(s, t) \in P$ is ``short'' if its shortest path (or any canonical choice of $s \leadsto t$ path will work here) has length $\le \ell$, or ``long'' otherwise.
\begin{itemize}
\item To handle the short pairs $(s, t)$, add the $\le \ell$ edges of a path to the preserver (cost $O(p \ell)$).

\item To handle the long pairs $(s, t)$, randomly sample a set of nodes $R$ by including each node independently with probability $\ell^{-1}$.
Let $P_R$ denote the demand pairs $(s, t)$ whose shortest path intersects a node $r \in R$, and note that each long pair $(s, t)$ is in $P_R$ with at least constant probability.
We then split each such pair $(s, t) \in P_R$ into two pairs $(s, r), (r, t)$ and add two reachability preservers via Theorem \ref{thm:reachubs}.2, to handle all pairs of the form $(s, r)$ and then all pairs of the form $(r, t)$, for cost
$$O\left(\sqrt{|R||P_R|n} + n\right) = O\left(np^{1/2} / \ell^{1/2} + n\right).$$
\end{itemize}
The proof now follows by setting $\ell := n^{2/3} / p^{1/3}$, giving total cost
\begin{align*}
|E(H)| = O\left(p\ell + np^{1/2}/\ell^{1/2} + n\right) = O\left(n^{2/3} p^{2/3} + n\right). \tag*{\qedhere}
\end{align*}
\end{proof}

We next turn to pairwise spanners.
The following auxiliary lemma will be useful.
Let us say that a \emph{$d$-initialization} of a graph $G$ is a subgraph $H$ obtained by arbitrarily choosing $d$ edges incident to each node in $G$ and including them in $H$, or including all edges incident to a node of degree $\le d$ (this simplifying technique, which replaces the standard \emph{clustering step}, was first used in \cite{Knudsen14}).
\begin{lemma} [e.g.\ \cite{Knudsen14, Chechik13soda} and others]\label{lem:initnbhd}
If $H$ is a $d$-initialization of an undirected unweighted graph $G$, and there is a shortest path $\pi$ in $G$ that is missing $x$ edges in $H$, then there are $\Omega(xd)$ total nodes adjacent in $H$ to any node in $\pi$.
\end{lemma}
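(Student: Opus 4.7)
The plan is to exploit two facts in combination: (i) endpoints of missing edges must have many $H$-neighbors by the definition of $d$-initialization, and (ii) nodes that are far apart on a shortest path cannot share neighbors. The heart of the argument is then just to collect enough endpoints of missing edges that are pairwise far apart on $\pi$ and sum up their neighborhood contributions.

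First I would observe that if an edge $e = (u,v)$ of $\pi$ is missing from $H$, then both $u$ and $v$ must have degree $> d$ in $G$; otherwise the $d$-initialization rule would have included all their incident edges, including $e$. Consequently, each such endpoint has $|N_H(\cdot)| \ge d$ selected neighbors in $H$.

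Next I would enumerate the distinct endpoints of missing edges of $\pi$ in order along $\pi$ as $p_1, p_2, \ldots, p_m$. Since the $x$ missing edges contribute at least $x+1$ distinct endpoints (consecutive missing edges share at most one endpoint), we have $m \ge x+1$. I then subselect $S := \{p_1, p_4, p_7, \ldots\}$ by taking every third endpoint in this list, so that $|S| = \Omega(x)$ and any two nodes of $S$ are separated by at least three positions in the endpoint ordering, hence at distance at least $3$ along $\pi$.

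The key step is then to argue that the $H$-neighborhoods $\{N_H(p) : p \in S\}$ are pairwise disjoint. For any two distinct $p, q \in S$, the fact that $\pi$ is a shortest path in $G$ means $\dist_G(p, q) \ge 3$. A common neighbor $w \in N_H(p) \cap N_H(q)$ would yield a $p \leadsto q$ walk of length $2$ in $G$, contradicting this lower bound. Therefore
\[
\Bigl|\bigcup_{v \in V(\pi)} N_H(v)\Bigr| \;\ge\; \sum_{p \in S} |N_H(p)| \;\ge\; |S| \cdot d \;=\; \Omega(xd),
\]
which is what we wanted. The only real subtlety is the bookkeeping around ``every third endpoint'' — one must make sure the counting of distinct endpoints absorbs the possibility that consecutive missing edges share a vertex, but this only costs a constant factor in the final $\Omega(xd)$ bound.
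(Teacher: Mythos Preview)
Your proof is correct and follows essentially the same idea as the paper's: both use that endpoints of missing edges have $H$-degree at least $d$, together with the shortest-path observation that no vertex can be adjacent to two path-nodes more than distance $2$ apart. The paper phrases this dually as a double-counting argument (each external node is adjacent to at most three nodes of $\pi$, so divide $\sum_{(u,v)\in\pi\setminus H}\deg_H(u)\ge xd$ by $3$) rather than your explicit every-third-endpoint selection, but the two presentations are equivalent.
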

\begin{proof}
Note that any node $y$ is adjacent to at most three nodes in $\pi$, since otherwise there is a path of length $2$ (passing through $y$) between the first and last such node, which is shorter than the corresponding subpath in $\pi$.
Additionally, for each edge $(u, v) \in \pi \setminus H$, there must be $\ge d$ edges in $H$ incident to $u, v$ since we did not choose to add $(u, v)$ itself in the initialization.
Thus, we have:
\begin{align*}
\left|\{x \ \mid \ x \text{ adjacent to } \pi\}\right| &\ge \frac{\sum \limits_{(u, v) \in \pi \setminus H} \deg_H(u)}{3}\\
&= \Omega\left(xd\right). \tag*{\qedhere}
\end{align*}
\end{proof}

Using this, we now give some hitting-set-based pairwise spanner constructions.
We will first prove:

\begin{theorem} [\cite{KV15, AB16soda}] \label{thm:twospan}
Every set of $|P| = p$ demand pairs in an $n$-node graph $G$ has a $+2$ pairwise spanner on $O(np^{1/3})$ edges.
\end{theorem}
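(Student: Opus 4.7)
The plan is to invoke Lemma~\ref{lem:firstmain} with trivial constants $a = 1$ (a single demand pair admits a distance preserver on $O(n)$ edges, namely a shortest path) and $b = 1$, $c = 1/3$. This reduces the theorem to exhibiting a $+2$ pairwise spanner \emph{with slack} on $O(np^{1/3})$ edges.

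For the slack version I would follow the standard hitting-set recipe from \cite{Knudsen14, KV15}, but tune the sampling rate to achieve only constant success probability per pair. Concretely, fix a degree parameter $d$ and a length threshold $\ell$; form a $d$-initialization $H$ of $G$; for every demand pair $(s,t)$ whose shortest path $\pi(s,t)$ misses at most $\ell$ edges in $H$, add those missing edges to $H$; then independently sample a vertex set $R$ at rate $q := c/(\ell d)$ for a suitable constant $c$, and include in $H$ a BFS shortest path tree of $G$ rooted at each $r \in R$.

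Correctness splits on path length. Any demand pair whose shortest path misses at most $\ell$ edges in the initialization is satisfied exactly by the direct addition. For a pair whose shortest path $\pi(s,t)$ misses $x > \ell$ edges, Lemma~\ref{lem:initnbhd} gives $\Omega(xd) = \Omega(\ell d)$ nodes $H$-adjacent to $\pi(s,t)$, so the probability that $R$ hits one of them is $1 - (1-q)^{\Omega(\ell d)} = \Omega(1)$. Whenever some $r \in R$ is adjacent in $H$ to some $v \in \pi(s,t)$, the BFS tree rooted at $r$ yields a path in $H$ of length $\dist_G(s,r) + \dist_G(r,t) \le (\dist_G(s,v) + 1) + (1 + \dist_G(v,t)) = \dist_G(s,t) + 2$. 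Hence every ``long'' pair is satisfied with at least constant probability, which is all that the slack model requires.

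Balancing the three contributions---$O(nd)$ from the initialization, $O(p\ell)$ from the direct addition, and $O(n|R|) = O(n^2/(\ell d))$ from the BFS trees---by setting $d := p^{1/3}$ and $\ell := n/p^{2/3}$ yields $O(np^{1/3})$ edges overall. The main point worth verifying carefully is that the sampling rate $q = \Theta(1/(\ell d))$ (giving only a constant hit probability per long pair) is enough in the slack regime; the usual complete-spanner argument uses $q = \Theta(\log n / (\ell d))$ to union-bound over all $p$ pairs and so pays the extra $\log$ factor that Lemma~\ref{lem:firstmain} now lets us discard.
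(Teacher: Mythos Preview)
Your proposal is correct and follows essentially the same approach as the paper: reduce to the slack version via Lemma~\ref{lem:firstmain}, then build the slack spanner by $d$-initialization, directly add missing edges for short pairs, and add BFS trees from a hitting set sampled at rate $\Theta(1/(\ell d))$ for long pairs, with the identical parameter choices $d=p^{1/3}$, $\ell=n/p^{2/3}$. The paper's exposition is the same construction and the same distance calculation.
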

Kavitha and Varma \cite{KV15} implicitly proved a pairwise spanner with slack of this quality, while the complete version was subsequently proved in \cite{AB16soda} with a more involved argument.
The former proof is:
\begin{theorem} [\cite{KV15}]
Every set of $|P| = p$ demand pairs in an $n$-node graph $G$ has a $+2$ pairwise spanner with slack on $O(np^{1/3})$ edges.
\end{theorem}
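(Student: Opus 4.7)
The plan is to mirror the slack-construction style of Theorem~\ref{thm:cedpforeach}, but using a $d$-initialization of $G$ together with the neighborhood bound of Lemma~\ref{lem:initnbhd}. I would start by taking a $d$-initialization $H_0$ with $d := p^{1/3}$; this contributes $O(nd) = O(np^{1/3})$ edges. For each demand pair $(s, t) \in P$, I fix a shortest path $\pi(s, t)$ in $G$ and say that $(s, t)$ is \emph{short} if at most $x_0 := n/p^{2/3}$ of its edges are missing from $H_0$, and \emph{long} otherwise. The short pairs are handled by explicitly adding all $\le x_0$ missing edges of $\pi(s,t)$, for a total cost $O(p x_0) = O(np^{1/3})$; these pairs are then satisfied exactly, with no error.

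The long pairs are where Lemma~\ref{lem:initnbhd} enters: each long $\pi(s, t)$ has $\Omega(x_0 d) = \Omega(n/p^{1/3})$ nodes adjacent to it in $H_0$. I would then sample a random set $R \subseteq V$ by including each vertex independently with probability $q := \Theta(p^{1/3}/n)$, and add a BFS tree of $G$ rooted at each $r \in R$. The expected total cost of these trees is $O(n \cdot nq) = O(np^{1/3})$, matching the other two terms.

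For any long pair, the probability that $R$ misses every node adjacent to $\pi(s,t)$ is $(1-q)^{\Omega(n/p^{1/3})} = e^{-\Omega(1)}$, so each long pair is satisfied with constant probability: if a sampled $r \in R$ is adjacent in $H_0$ to some $v \in \pi(s, t)$, then the BFS tree rooted at $r$ gives $\dist_H(s, t) \le \dist_H(s, r) + \dist_H(r, t) \le (\dist_G(s,v)+1) + (\dist_G(v,t)+1) = \dist_G(s, t) + 2$, exactly the desired $+2$ error. Combined with the short pairs, linearity of expectation yields $\Omega(p)$ satisfied pairs in expectation, so some realization of $R$ produces a sparsifier satisfying a constant fraction of $P$.

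The main obstacle I anticipate is just the bookkeeping: balancing the three terms $nd$, $p x_0$, and $n^2 q$ so that each equals $O(np^{1/3})$ under the hitting-set constraint $q x_0 d = \Omega(1)$ is what forces the particular choices $d = p^{1/3}$, $x_0 = n/p^{2/3}$, and $q = \Theta(p^{1/3}/n)$. A secondary subtlety is noting that it is the \emph{adjacency} bound of Lemma~\ref{lem:initnbhd} (rather than a weaker ``near the path'' bound) that yields a $+2$ detour via $r$ instead of a $+4$ one; this is exactly why the construction gives $+2$ here, and why stronger analogues of Lemma~\ref{lem:initnbhd} will be needed for the $+4$ and $+6$ cases later.
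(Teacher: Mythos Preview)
Your proposal is correct and follows essentially the same construction as the paper: a $d$-initialization with $d=p^{1/3}$, threshold $\ell=x_0=n/p^{2/3}$ for short versus long pairs, random sampling at rate $(\ell d)^{-1}=p^{1/3}/n$, BFS trees from sampled nodes, and the same $+2$ detour computation via Lemma~\ref{lem:initnbhd}. The only differences are cosmetic (your $x_0$ is the paper's $\ell$, and you spell out the linearity-of-expectation step that the paper leaves implicit).
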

\begin{proof}
Let $\ell, d$ be parameters, and like before, say that a demand pair $(s, t) \in P$ is ``short'' if its shortest path is currently missing $\le \ell$ edges in the spanner, or ``long'' otherwise.
Start the spanner as a $d$-initialization of $G$ (cost $O(nd)$).
Then:

\begin{itemize}
\item For the short pairs $(s, t)$, add the $\le \ell$ missing edges of a shortest path to the spanner (cost $O(p \ell)$).

\item To handle the long pairs $(s, t)$, randomly sample a set of nodes $R$ by including each node independently with probability $(\ell d)^{-1}$.
Add to the spanner a shortest path tree rooted at each $r \in R$ (cost $O(n^2/(\ell d))$).
By Lemma \ref{lem:initnbhd} there are $\Omega(\ell d)$ nodes adjacent to the shortest $s \leadsto t$ path, so with constant probability or higher, we sample a node $r \in R$ adjacent to a node $u$ on this shortest path.
In this event, we compute:
\begin{align*}
\dist_H(s, t) &\le \dist_H(s, r) + \dist_H(r, t) \tag*{triangle inequality}\\
&= \dist_G(s, r) + \dist_G(r, t) \tag*{shortest path tree at $r$}\\
&\le \dist_G(s, u) + \dist_G(u, t) + 2 \tag*{triangle inequality}\\
&= \dist_G(s, t) + 2 \tag*{$u$ on shortest $s \leadsto t$ path.}
\end{align*}
\end{itemize}

To complete the proof we then set $\ell := n/p^{2/3}$ and $d := p^{1/3}$, giving
\begin{align*}
|E(H)| = O\left(nd + p\ell + n^2/(\ell d)\right) = O\left(np^{1/3} \right). \tag*{\qedhere}
\end{align*}
\end{proof}

A similar story holds for the $+6$ pairwise spanner.
Kavitha \cite{Kavitha17} proved:
\begin{theorem} [\cite{Kavitha17}] \label{thm:sixspan}
Every set of $|P| = p$ demand pairs in an $n$-node graph $G$ has a $+6$ pairwise spanner on $O(np^{1/4})$ edges.
\end{theorem}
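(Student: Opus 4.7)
The plan is to apply the Main Lemma (Lemma \ref{lem:firstmain}) with the trivial complete construction (a single shortest path on $\le n-1$ edges when $|P|=1$), reducing the theorem to the construction of a $+6$ pairwise spanner \emph{with slack} on $O(np^{1/4})$ edges. I would build the slack version using the hitting-set template of Theorems \ref{thm:cedpforeach} and \ref{thm:twospan}.

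The slack construction would proceed as follows. Begin the spanner as a $d$-initialization (cost $O(nd)$). Classify each demand pair $(s,t) \in P$ as \emph{short} if its shortest path is missing at most $\ell$ edges in $H$ after initialization and \emph{long} otherwise, and add the missing shortest-path edges of short pairs directly at cost $O(p\ell)$. For long pairs, Lemma \ref{lem:initnbhd} guarantees $\Omega(\ell d)$ nodes adjacent in $H$ to each such shortest path, so sampling a hub set $R$ at rate $1/(\ell d)$ hits such an adjacent node with constant probability. Unlike the $+2$ proof, the $+6$ budget lets me split each long shortest path into thirds and route through \emph{two} hubs---one adjacent to a node in the first third and one adjacent to a node in the last third---absorbing $+4$ of slack through three triangle inequalities and leaving $+2$ to spare for inexact hub-linking. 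With parameters $d = \Theta(p^{1/4})$ and $\ell = \Theta(np^{-3/4})$, the initialization and short-pair contributions each become $O(np^{1/4})$ and the hub set has size $|R| = \Theta(p^{1/2})$.

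The main obstacle is the hub-linking substructure: for each long pair $(s, t)$ hit by sampled hubs $r_1, r_2$, I need $\dist_H(s, r_1)$, $\dist_H(r_1, r_2)$, and $\dist_H(r_2, t)$ each to be within $+O(1)$ of their $G$-values, and the total cost of these additions must also fit within $O(np^{1/4})$. Naive attempts---applying Theorem \ref{thm:cedp} for exact preservers, or Theorem \ref{thm:twospan} for approximate ones, to the induced $R \times R$ and $R \times (S \cup T)$ pair sets---overshoot the budget. I expect the construction to borrow the finer structural ideas from Kavitha's original argument (for instance, a secondary hub sample mediating hub-to-endpoint distances, or a charging scheme that pays only for the hub-endpoint routes that actually fire in the sampled configuration), with the key simplification that we only need to cover an $\Omega(1)$ fraction of demand pairs, which lets us replace all high-probability hitting-set arguments with constant-probability ones. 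Lemma \ref{lem:firstmain} then lifts the slack construction to the complete $+6$ pairwise spanner of the claimed size, stripping the $\log$ factor from Kavitha's original $\Oish(np^{1/4})$ bound.
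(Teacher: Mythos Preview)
Your high-level framework is right: reduce to a slack spanner via Lemma~\ref{lem:firstmain}, $d$-initialize, handle short pairs directly, and for long pairs sample hubs $R$ at rate $(\ell d)^{-1}$ and route through two hubs.  Your parameter choices $d=\Theta(p^{1/4})$, $\ell=\Theta(n/p^{3/4})$, $|R|=\Theta(p^{1/2})$ are also exactly the ones that work.  But the proposal has a real gap at precisely the point you flag as the ``main obstacle,'' and the fix is simpler than the secondary-sample or charging ideas you speculate about.

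Two concrete missing ingredients.  First, the hub--hub link: you dismissed applying a $+2$ pairwise spanner to $R\times R$ because Theorem~\ref{thm:twospan} gives $O(n|R|^{2/3})=O(np^{1/3})$, which overshoots.  The right tool is the \emph{subsetwise} $+2$ spanner of Theorem~\ref{thm:2subspan}, which for demand set $R\times R$ costs only $O(n|R|^{1/2})=O(np^{1/4})$.  That single black box closes the budget.  Second, the endpoint--hub link: you worry about controlling $\dist_H(s,r_1)$ and $\dist_H(r_2,t)$ and contemplate $R\times(S\cup T)$ preservers, but none are needed.  Instead, for each long pair add the first and last $\ell$ \emph{missing} edges of its shortest path (cost $O(p\ell)=O(np^{1/4})$).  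Now the added prefix (resp.\ suffix) has $\Omega(\ell d)$ adjacent nodes by Lemma~\ref{lem:initnbhd}, so with constant probability you sample $r_1,r_2\in R$ adjacent to nodes $u_1,u_2$ lying \emph{on the added prefix/suffix}; then $\dist_H(s,u_1)=\dist_G(s,u_1)$ and $\dist_H(u_2,t)=\dist_G(u_2,t)$ exactly.  The $+6$ decomposes as $+1$ (edge $u_1r_1$) $+2$ (subsetwise spanner on $r_1,r_2$) $+1$ (edge $r_2u_2$) $+2$ (triangle inequality replacing $\dist_G(r_1,r_2)$ by $\dist_G(u_1,u_2)$).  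Your ``split into thirds'' picture does not yield this control; you specifically need the hubs adjacent to the portion of the path whose edges you have already inserted.

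A minor correction: for the $+6$ spanner the $O(np^{1/4})$ bound was already log-free in \cite{Kavitha17}; the paper's point here is only that the simpler slack proof suffices.  The log-shaving is new only for the $+4$ case.
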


Kavitha also mentions a simpler proof that results in a pairwise spanner with slack on $O(np^{1/4})$ edges.
By our Lemma \ref{lem:firstmain}, in fact, this simpler proof implies Theorem \ref{thm:sixspan}.
The spanner with slack is constructed by reduction to the following key lemma in the area, which has been repeatedly rediscovered:
\begin{theorem} [\cite{BKMP10, Pettie09, ElkinUnpub, CGK13}] \label{thm:2subspan}
For every $n$-node undirected unweighted graph $G = (V, E)$ and set of demand pairs with the structure $P = S \times S$ for some $S \subseteq V, |S| = s$, there is a $+2$ pairwise spanner of $(G, P)$ on $O(ns^{1/2})$ edges.
\end{theorem}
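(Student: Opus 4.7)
The plan is to construct a $+2$ pairwise spanner with slack of $(G, S \times S)$ on $O(n s^{1/2})$ edges, mirroring the hitting-set template of Theorem \ref{thm:twospan}; Lemma \ref{lem:firstmain} then promotes this to a complete sparsifier of the same size.

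First, I would take a $d$-initialization of $G$ for a parameter $d$ to be tuned, so that by Lemma \ref{lem:initnbhd} any shortest path missing $x$ edges in $H$ has $\Omega(xd)$ adjacent nodes. Classify a demand pair $(u, v) \in S \times S$ as \emph{short} if its shortest path is missing at most $\ell$ edges in $H$, and \emph{long} otherwise. Handle short pairs by adding the missing edges of a shortest path directly, at total cost $O(s^2 \ell)$.

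For long pairs, sample $R \subseteq V$ at rate $\Theta(1/(\ell d))$, giving $|R| = O(n/(\ell d))$; by Lemma \ref{lem:initnbhd} each long shortest path is adjacent to some node of $R$ with constant probability. This is where the $P = S \times S$ structure plays a role: rather than adding full BFS trees from each $r \in R$ (which would be too expensive for the target size), it suffices to preserve distances only for pairs in $R \times S$. Treating $R$ as the source set, I would invoke Theorem \ref{thm:reachubs}.2 to build a distance preserver on $O(|R|\sqrt{ns}) = O(n\sqrt{ns}/(\ell d))$ edges for this $|R|\cdot|S|$-pair set. Then for each long pair $(u, v)$ whose shortest path is adjacent to some $r \in R$ at a node $w$, the triangle inequality gives
\[
\dist_H(u, v) \le \dist_H(u, r) + \dist_H(r, v) = \dist_G(u, r) + \dist_G(r, v) \le \dist_G(u, v) + 2,
\]
so the pair is satisfied with $+2$ stretch, and with constant probability across the hitting-set randomness.

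Balancing the three terms $nd$, $s^2\ell$, and $n\sqrt{ns}/(\ell d)$ by choosing $d \approx s^{1/2}$ and $\ell \approx n^{3/4}/s$ gives the target $O(ns^{1/2})$ in the natural regime where these choices are consistent. The main obstacle I anticipate is the parameter balance across all regimes of $s$: the pure hitting-set argument handles $s$ at most roughly $\sqrt{n}$ cleanly, and for larger $s$ a complementary ingredient is needed — for instance, combining with the known $+2$ all-pairs additive spanner of size $O(n^{3/2})$, retuning $(d, \ell)$ in that regime, or substituting a different structured distance-preserving primitive for $R \times S$. Once the spanner with slack is established for all $s$, Lemma \ref{lem:firstmain} upgrades it to a complete $+2$ spanner of size $O(ns^{1/2})$.
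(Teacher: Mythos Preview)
The paper does not prove this theorem: immediately after stating it, the text reads ``We will not recap the proof of Theorem \ref{thm:2subspan} here,'' and the result is used as a black box imported from \cite{BKMP10, Pettie09, ElkinUnpub, CGK13}. Those references prove it by a \emph{path-buying} argument, which is qualitatively different from the hitting-set template and is precisely the kind of ``more involved'' construction the paper is otherwise trying to avoid.

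Your proposed hitting-set route has two real gaps. First, Lemma~\ref{lem:firstmain} does not apply as you use it. Its telescoping relies on the slack bound having the form $O(n^b |P|^c)$, so that it shrinks geometrically as satisfied pairs are removed. Your bound is $O(n s^{1/2})$, a function of $s = |S|$, and $S$ does not shrink when you delete satisfied pairs from $P \subseteq S \times S$; only the $O(|P|\ell)$ short-pair term depends on $|P|$, while the $O(nd)$ initialization and the $R \times S$ preserver cost recur in every round and pick up a $\log$ factor rather than telescoping away.

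Second, even granting the slack step, the parameters do not balance across the full range of $s$. With the $R \times S$ preserver cost you quote, one gets (after setting $d = s^{1/2}$) terms $ns^{1/2}$, $s^2 \ell$, and $n^{3/2}/\ell$, which balance to $O(ns^{1/2})$ only when $s \le n^{1/2}$. Your suggested patch for larger $s$ via the $O(n^{3/2})$ all-pairs $+2$ spanner only matches $O(ns^{1/2})$ at $s = \Theta(n)$, leaving the entire range $n^{1/2} \ll s \ll n$ uncovered. (Also, Theorem~\ref{thm:reachubs}.2 sits in the reachability-preserver discussion; invoking it for exact distances in this undirected setting needs independent justification.) In short, the hitting-set machinery of this paper recovers $O(n p^{1/3}) = O(n s^{2/3})$ for $P = S \times S$ via Theorem~\ref{thm:twospan}, but pushing it down to $O(n s^{1/2})$ seems to genuinely require the path-buying idea from the cited references.
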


We will not recap the proof of Theorem \ref{thm:2subspan} here.
Given this theorem, the spanner with slack is proved as follows:
\begin{theorem} [\cite{Kavitha17}]
Every set of $|P| = p$ demand pairs in an $n$-node graph $G$ has a $+6$ pairwise spanner with slack on $O(np^{1/4})$ edges.
\end{theorem}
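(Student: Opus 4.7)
The plan is to follow the hitting-set template from the $+2$ pairwise spanner (Theorem~\ref{thm:twospan}), but to exploit Theorem~\ref{thm:2subspan} on a sampled set $R$ to cheaply transport the ``middle'' of each long shortest path through $H$. Spending $+4$ additional additive error in this way buys a smaller sample rate and drops the total cost from $O(np^{1/3})$ to $O(np^{1/4})$.

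Set $d := p^{1/4}$ and $\ell := n/p^{3/4}$. I will construct $H$ in four steps:
\begin{itemize}
\item Initialize $H$ as a $d$-initialization of $G$, at cost $O(nd) = O(np^{1/4})$.
\item For each $(s,t) \in P$, fix a shortest path $\pi(s,t)$ in $G$. If $\pi(s,t)$ is missing at most $3\ell$ edges in $H$, add all of them to $H$; otherwise add only the first $\ell$ and the last $\ell$ missing edges in $\pi$-order. Total cost: $O(p\ell) = O(np^{1/4})$.
\item Sample $R \subseteq V$ by including each node independently with probability $p_R := C/(\ell d)$ for a large enough absolute constant $C$, so that $|R| = O(p^{1/2})$.
\item Add to $H$ a $+2$ pairwise spanner of $(G, R \times R)$ via Theorem~\ref{thm:2subspan}, at cost $O(n|R|^{1/2}) = O(np^{1/4})$.
\end{itemize}

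Pairs with at most $3\ell$ missing edges are satisfied exactly. For a \emph{long} pair, let $\pi_s$ be the prefix of $\pi(s,t)$ from $s$ up through the head of the $\ell^{\text{th}}$ missing edge and $\pi_t$ the analogous suffix from $t$; both are entirely contained in $H$ by step two. Applying Lemma~\ref{lem:initnbhd} to each of these subpaths in the underlying $d$-initialization, each has $\Omega(\ell d)$ neighbors in $H$; moreover the middle of $\pi(s,t)$ still contains $\ge \ell$ missing edges, so the usual shortest-path shortcutting argument forbids any node from being $H$-adjacent to both $\pi_s$ and $\pi_t$, making these two neighborhoods disjoint. Hence, by independence, with probability bounded below by an absolute constant we find $r_1, r_2 \in R$ that are $H$-adjacent to some $u_1 \in \pi_s$ and $u_2 \in \pi_t$ respectively. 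The walk
\[
s \leadsto u_1 \to r_1 \leadsto r_2 \to u_2 \leadsto t
\]
lies entirely in $H$ (outer legs as subpaths of $\pi_s$ and $\pi_t$, middle leg via the $+2$ spanner on $R \times R$), and since $u_1, u_2$ lie on $\pi(s,t)$ in the correct order, two applications of the triangle inequality give $\dist_H(s,t) \le \dist_G(s,t) + 6$.

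The main obstacle I anticipate is the careful coordination of step two with Lemma~\ref{lem:initnbhd}. We need $\pi_s$ to retain its $\Omega(\ell d)$ $d$-initialization neighbors even though all of its $\ell$ missing edges are added to $H$ by the end of step two, and we need the $3\ell$ threshold to leave enough missing edges in the middle that $N_H(\pi_s)$ and $N_H(\pi_t)$ are genuinely disjoint. Both should go through because the $d$-initialization is frozen \emph{before} step two, so the neighbor sets it generates only grow under further edge additions, and the gap $3\ell > 2\ell + 2$ is exactly the slack needed for the shortcutting disjointness argument to apply to $H$-neighbors (not only to external ones).
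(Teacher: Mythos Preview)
Your proof is correct and follows essentially the same approach as the paper: $d$-initialization, add prefix/suffix missing edges, sample $R$ at rate $\Theta((\ell d)^{-1})$, add a $+2$ subset spanner on $R\times R$ via Theorem~\ref{thm:2subspan}, and set $d=p^{1/4}$, $\ell=n/p^{3/4}$. The only cosmetic differences are your $3\ell$ threshold (vs.\ $\ell$) and your explicit disjointness argument for $N_H(\pi_s)$ and $N_H(\pi_t)$; the paper skips the latter since a union bound on the two miss-events already gives constant probability of hitting both neighborhoods, so independence is not needed.
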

\begin{proof}
Let $\ell, d$ be parameters, and start the spanner as a $d$-initialization of $G$ (cost $O(nd)$).
A demand pair $(s, t) \in P$ is ``short'' if the shortest $s \leadsto t$ path is missing $\le \ell$ edges in the spanner, or ``long'' otherwise.
\begin{itemize}
\item To handle the short pairs $(s, t)$, add the $\le \ell$ missing edges in its shortest path to the spanner (cost $O(p \ell)$).

\item To handle the long demand pairs $(s, t)$, there are two steps.
First, add the first and last $\ell$ missing edges of the shortest $s \leadsto t$ path to the spanner (cost $O(p \ell)$).
Then, randomly sample a set $R$ by including each node with probability $(\ell d)^{-1}$.
Using Theorem \ref{thm:2subspan}, add a $+2$ pairwise spanner on demand pairs $R \times R$; this costs
$$O\left(n \sqrt{|R|}\right) = O\left(n^{3/2} / \sqrt{\ell d}\right)$$
edges.
By Lemma \ref{lem:initnbhd} the added prefix and suffix of the shortest $s \leadsto t$ path each have $\Omega(\ell d)$ adjacent nodes.
Thus, with constant probability or higher, we sample $r_1, r_2 \in R$ such that $r_1$ is adjacent to $u_1$ in the added prefix and $r_2$ is adjacent to $u_2$ in the added suffix.
In this event we can compute:
\begin{align*}
\dist_H(s, t) &\le \dist_H(s, r_1) + \dist_H(r_1, r_2) + \dist_H(r_2, t) \tag*{triangle inequality}\\
&\le \dist_H(s, r_1) + (\dist_G(r_1, r_2) + 2) + \dist_H(r_2, t) \tag*{$R \times R$ $+2$ pairwise spanner}\\
&\le (\dist_H(s, u_1) + 1) + (\dist_G(r_1, r_2) + 2) + (\dist_H(u_2, t) + 1) \tag*{triange inequality}\\
&= \dist_G(s, u_1) + \dist_G(r_1, r_2) + \dist_G(u_2, t) + 4 \tag*{added prefix/suffix}\\
&\le \dist_G(s, u_1) + \dist_G(u_1, u_2) + \dist_G(u_2, t) + 6 \tag*{triangle inequality}\\
&= \dist_G(s, t) + 6 \tag*{$u_1, u_2$ on $s \leadsto t$ shortest path.}
\end{align*} 
\end{itemize}
To complete the proof we set $\ell := n/p^{3/4}$ and $d := p^{1/4}$, giving
\begin{align*}
|E(H)| = O\left(nd + p\ell + n^{3/2}/\sqrt{\ell d} \right) = O\left( np^{1/4} \right). \tag*{\qedhere}
\end{align*}
\end{proof}

Finally, we discuss the $+4$ pairwise spanner.
Kavitha \cite{Kavitha17} proved a $+4$ pairwise spanner on $\Oish(np^{2/7})$ edges, which can easily be turned into a $+4$ pairwise spanner with slack on $O(np^{2/7})$ edges.
By Lemma \ref{lem:firstmain}, in fact this implies a complete $+4$ pairwise spanner on $O(np^{2/7})$ edges, thus shaving the $\log$ factors from the original result in \cite{Kavitha17}.
Kavitha's proof is as follows:

\begin{theorem} [\cite{Kavitha17}] \label{thm:fourspan}
Every set of $|P| = p$ demand pairs in an $n$-node graph has a $+4$ pairwise spanner with slack on $O(np^{2/7})$ edges.
\end{theorem}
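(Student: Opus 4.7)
The plan is to closely parallel the proofs of Theorems~\ref{thm:twospan} and~\ref{thm:sixspan}: start the spanner as a $d$-initialization of $G$ (cost $O(nd)$), and classify each demand pair as \emph{short} (shortest $s \leadsto t$ path missing $\le \ell$ edges in the current spanner) or \emph{long}. For short pairs, add the missing edges directly (cost $O(p\ell)$). For each long pair, add the first $\ell$ missing edges as a prefix (another $O(p\ell)$), and then sample $R$ by including each node independently with probability $(\ell d)^{-1}$. By Lemma~\ref{lem:initnbhd}, the added prefix has $\Omega(\ell d)$ adjacent nodes in $H$, so with constant probability $R$ contains some $r$ adjacent to a prefix node $u$, and triangle inequality then gives
\[ \dist_H(s, t) \le \dist_H(s, r) + \dist_H(r, t) \le (\dist_G(s, u) + 1) + \dist_H(r, t). \]

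For the target bound $O(np^{2/7})$, I would set $d = p^{2/7}$ and $\ell = n/p^{5/7}$, so that $nd = p\ell = O(np^{2/7})$ and $|R| = n/(\ell d) = p^{3/7}$. To complete the $+4$ analysis, we need $\dist_H(r, t) \le \dist_G(r, t) + 2$ for each routed pair $(r, t)$; combined with the bound above and the fact that $\dist_G(r, t) \le \dist_G(u, t) + 1$, this yields the desired $+4$ total. Following Kavitha's construction, I would supply this $+2$ bound via an auxiliary piece designed to use $O(np^{2/7})$ edges, built out of Theorem~\ref{thm:2subspan} applied to the sampled set $R$ (possibly augmented with a second sample hitting adjacent to targets) plus a few distance preservers or shortest path trees along carefully chosen sampled-to-target routes.

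The main obstacle is pinning down the correct auxiliary structure. The generic Theorem~\ref{thm:twospan} $+2$ pairwise spanner applied to the $\le p$ routing pairs $(r, t)$ would cost $O(np^{1/3}) > O(np^{2/7})$, and a Theorem~\ref{thm:cedp} distance preserver on those pairs is even more expensive, so neither off-the-shelf result suffices. Kavitha's specific construction escapes these bounds by a secondary application of Lemma~\ref{lem:initnbhd} (this time hitting neighborhoods of targets) combined with Theorem~\ref{thm:2subspan} on the combined sample set, whose cost $O(n|R|^{1/2}) = O(np^{3/14})$ is comfortably within the $O(np^{2/7})$ budget. With this structure in place, the final step is to tune the sampling rates so that every piece hits its intended target with constant probability, yielding the $+4$ pairwise spanner with slack; the dropping of the log factor relative to \cite{Kavitha17} comes precisely from needing only constant hit probabilities, exactly as in the other proofs in this note.
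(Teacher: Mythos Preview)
Your setup and parameters ($d=p^{2/7}$, $\ell=n/p^{5/7}$) match the paper, and you correctly locate the difficulty: getting a $+2$ bound from the sampled node back to the target within the $O(np^{2/7})$ budget. But your guess at the auxiliary structure is wrong, and as written the argument does not close. The paper's construction does \emph{not} use Theorem~\ref{thm:2subspan} at all. If you follow the $R\times R$ route you sketch (sample hitting both a prefix and a suffix neighbor, then apply the $S\times S$ $+2$ spanner), the error accounting becomes $1+(\dist_G(r_1,r_2)+2)+1 \le (\dist_G(u_1,u_2)+2)+4$, i.e.\ $+6$, which is exactly the Theorem~\ref{thm:sixspan} argument and not $+4$.

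The missing idea is a \emph{three}-way classification rather than two. Pairs whose shortest path is missing more than $n/d^2$ edges are ``long'': by Lemma~\ref{lem:initnbhd} such a path has $\Omega(n/d)$ adjacent nodes, so a sample $R_1$ of rate $d/n$ hits a neighbor with constant probability, and full BFS trees from $R_1$ (cost $O(nd)$) give $+2$ for these pairs. Pairs missing between $\ell$ and $n/d^2$ edges are ``medium'': add prefix and suffix as you describe, sample $R_2$ at rate $(\ell d)^{-1}$, and then for every pair $r,r'\in R_2$ search their $H$-neighbors for $u,u'$ whose shortest path is missing at most $n/d^2$ edges; if such $u,u'$ exist, add a shortest $u\leadsto u'$ path for the pair minimizing $\dist_G(u,u')$. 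The cost is $O(|R_2|^2\cdot n/d^2)=O(n^3/(\ell^2 d^4))=O(np^{2/7})$, and because $u,u'$ were chosen to minimize distance, one gets $\dist_H(x,x')\le \dist_G(u,u')+4\le \dist_G(x,x')+4$ for the prefix/suffix hit nodes $x,x'$, with the prefix and suffix themselves contributing no further error. The upper bound $n/d^2$ on missing edges is what makes this affordable; that is precisely why the long pairs must be split off and handled separately.
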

\begin{proof}
Let $\ell, d$ be parameters, and let the spanner be a $d$-initialization of $G$ (cost $O(nd)$).
This time there are three cases: a demand pair $(s, t)$ is ``short'' if its shortest path is missing $\le \ell$ edges, it is ``medium'' if its shortest path is missing $>\ell$ and $\le n/d^2$ edges, or it is ``long'' otherwise.
\begin{itemize}
\item To handle the short pairs $(s, t)$, add the $\le \ell$ missing edges of the shortest path to the spanner (cost $O(p \ell)$).

\item To handle the long pairs $(s, t)$, randomly sample a set of nodes $R_1$ by including each node independently with probability $d/n$, and add the edges of a BFS tree rooted at each $r \in R_1$ to the spanner (cost $O(nd)$).
By Lemma \ref{lem:initnbhd} there are $\Omega(n/d)$ nodes adjacent to the shortest $s \leadsto t$ path, so with constant probability or higher we sample a node $r \in R_1$ adjacent to a node $u$ on this path.
In this event, we compute:
\begin{align*}
\dist_H(s, t) &\le \dist_H(s, r) + \dist_H(r, t) \tag*{triangle inequality}\\
&= \dist_G(s, r) + \dist_G(r, t) \tag*{shortest path tree}\\
&\le \dist_G(s, u) + \dist_G(u, t) + 2 \tag*{triangle inequality}\\
&= \dist_G(s, t) + 2 \tag*{$u$ on a shortest $s \leadsto t$ path.}
\end{align*}

\item There are two steps to handle the medium pairs $(s, t)$.
First, add the first and last $\ell$ missing edges in the shortest path to a spanner (cost $O(p \ell)$).
Then, randomly sample a set of nodes $R_2$ by including each node independently with probability $(\ell d)^{-1}$.
For each pair of nodes $r, r' \in R_2$, check to see if there exist nodes $u, u'$ adjacent to $r, r'$ (respectively) in the current spanner $H$ such that the shortest $u \leadsto u'$ path is missing $\le n/d^2$ edges.
If so, then choose nodes $u, u'$ with this property minimizing $\dist_G(u, u')$, and add all missing edges in the shortest $u \leadsto u'$ path to the spanner.
If no such nodes $u, u'$ exist, then do nothing for this pair $r, r' \in R$.
This step costs
$$O\left( |R_2|^2 \cdot \frac{n}{d^2} \right) = O\left( \frac{n^2}{\ell^2 d^2} \cdot \frac{n}{d^2} \right) = O\left( \frac{n^3}{\ell^2 d^4} \right)$$
edges.
For a medium demand pair $(s, t)$, by Lemma \ref{lem:initnbhd} there are $\Omega(\ell d)$ nodes adjacent to the added prefix and suffix, so with constant probability or higher we sample nodes $r, r' \in R_2$ adjacent to nodes $x, x'$ on the added prefix, suffix (respectively).
In this event, note that there are $\le n/d^2$ missing edges on the shortest $x \leadsto x'$ path, since $x, x'$ are on the $s \leadsto t$ shortest path and $(s, t)$ is a medium pair.\footnote{A technical detail here is that this step requires that shortest paths are chosen \emph{consistently}, i.e.\ the canonical shortest $x \leadsto x'$ path is a subpath of the canonical shortest $s \leadsto t$ path.}
Thus, when $r, r' \in R_2$ are considered in the construction, we will indeed add a new shortest path to the spanner (as opposed to the case where we do nothing).
Letting $u, u'$ be the endpoints of this added shortest path, we compute:
\begin{align*}
\dist_H(s, t) &= \dist_H(s, x) + \dist_H(x, x') + \dist_H(x', t) \tag*{$x, x'$ on shortest $s \leadsto t$ path}\\
&= \dist_G(s, x) + \dist_H(x, x') + \dist_G(x', t) \tag*{$x, x'$ on added prefix, suffix}\\
&\le \dist_G(s, x) + \left(\dist_H(u, u') + 4\right) + \dist_G(x', t) \tag*{triangle inequality}\\
&= \dist_G(s, x) + \dist_G(u, u') + \dist_G(x', t) + 4 \tag*{shortest $u \leadsto u'$ path added}\\
&\le \dist_G(s, x) + \dist_G(x, x') + \dist_G(x', t) + 4 \tag*{$\dist_G(u, u')$ minimal}\\
&= \dist_G(s, t) + 4 \tag*{$x, x'$ on shortest $s \leadsto t$ path.}
\end{align*}
\end{itemize}
We then complete the proof by setting $\ell := n/p^{5/7}$ and $d := p^{2/7}$, giving
\begin{align*}
|E(H)| = O\left(nd + p\ell + n^3/(\ell^2 d^4) \right) = O\left(np^{2/7}\right). \tag*{\qedhere}
\end{align*}
\end{proof}

This completes the proof(s) of Theorem \ref{thm:sparsifiers}.

\section*{Acknowledgments}

I am grateful to Stephen Kobourov, Reyan Ahmed, Richard Spence, Richard Peng, and two anonymous reviewers for comments and questions that have improved the quality of this writeup.

\bibliographystyle{acm}
\bibliography{references}

\end{document}